\newcommand{\del}{{\partial}}
\theoremstyle{plain}
\newtheorem{prop}{{Proposition}}
\theoremstyle{definition}
\begin{document}
\title[A black hole with no MTT asymptotic to its event horizon]{A black hole with no marginally trapped tube asymptotic to its event horizon}
\author{Catherine Williams} 
\address{Department of Mathematics, Stanford University, Stanford, CA 94305}
\email{cathwill@math.stanford.edu}      
\date{\today}

%%%%%%%%%%%%%%%%%%%%%%%%%%%%%%%%%
%     				%
%	    ABSTRACT		%
%     				%
%%%%%%%%%%%%%%%%%%%%%%%%%%%%%%%%%

\begin{abstract}
We construct an example of a spherically symmetric black hole interior in which there is \emph{no} (spherically symmetric) marginally trapped tube asymptotic to the event horizon.  
The construction uses a self-gravitating massive scalar field matter model, and the key condition we impose is that the scalar field $\phi$ be bounded below by a positive constant along the event horizon.
\end{abstract}

\maketitle

\hyphenation{mar-gin-al-ly}

%%%%%%%%%%%%%%%%%%%%%%%%%%%%%%%%%
%     				%
%	  INTRODUCTION		%
%     				%
%%%%%%%%%%%%%%%%%%%%%%%%%%%%%%%%%

\section{Introduction}

Marginally trapped tubes, those hypersurfaces of spacetime that are foliated by apparent horizons, currently play an important role in mathematical black hole research.
When they exist, these hypersurfaces generally lie inside of black holes and, roughly speaking, form a boundary between the regions of weak and strong gravitational fields there.
Marginally trapped tubes that are everywhere spacelike or null are often referred to as dynamical or isolated horizons in the physics literature, and in some contexts, these are treated as alternate models of black holes' boundaries.
The interested reader is referred to \cite{A} for an introduction and physical motivation, \cite{B, SKB} for numerical results and examples, and \cite{AMSS, AMS, AMS2, AG, CM} for some recent mathematical developments.

%%%%%%%
%     %
%     %
%%%%%%%

The expectation in the physics community is that marginally trapped tubes will generally form during gravitational collapse --- in particular, that any physically reasonable black hole will contain one ---  becoming spacelike or null at late times and asymptotically approaching the event horizon. 
In the special case of spherical symmetry, quite a bit is known about this conjecture in regard to those marginally trapped tubes whose foliating apparent horizons are round, i.e., those tubes which are themselves spherically symmetric.
(Even in a spherically symmetric spacetime, there may exist many distinct \emph{non}-spherically symmetric marginally trapped tubes, but at most one spherically symmetric one.)
In the Schwarzschild and Reissner-Nordstr\"om spacetimes, the spherically symmetric marginally trapped tubes in fact coincide exactly with the black holes' event horizons.
In Vaidya black hole spacetimes where the dominant energy condition is satisfied, the spherically symmetric marginally trapped tube is achronal and either becomes asymptotic to or eventually coincides with the event horizon  \cite{A, Wil}.
This same `nice' marginally trapped tube behavior occurs in black hole spacetimes evolving from sufficiently small spherically symmetric initial data for the Einstein equations coupled with various matter models, namely a scalar field, the Maxwell equations and a real scalar field, the Vlasov equation (describing a collisionless gas), or a Higgs field  \cite{C2}, \cite{D1, DRod}, \cite{DR}, \cite{Wil2}.

%%%%%%%
%     %
%     %
%%%%%%%

The object of this paper, however, is to construct a spherically symmetric black hole interior that provides a counterexample of sorts ---  a black hole which does \emph{not} contain a (spherically symmetric) marginally trapped tube asymptotic to its event horizon. 
This constitutes the first known example of a black hole not exhibiting the expected marginally trapped tube behavior.%
\footnote{Christodoulou has constructed examples of spherically symmetric self-gravitating massless scalar field spacetimes that do not contain (spherically symmetric) marginally trapped tubes, but these spacetimes also do not contain black holes --- rather, they exhibit naked singularities \cite{C3, C4}.}
Our construction uses a massive scalar field matter model, that is, the Einstein equations coupled to the Klein-Gordon equation $\Box_g \phi = \mu \phi$, where $\mu > 0$ is constant.
We specify (spherically symmetric) initial data along two characteristic hypersurfaces in such a way that one of the hypersurfaces could coincide with the event horizon of a black hole.
(This setup is similar to that in \cite{Wil}.)
We then show that the maximal future development of this initial data --- which represents the interior of the black hole --- contains \emph{no} spherically symmetric marginally trapped tube asymptotic to the event horizon; indeed, it need not contain a spherically symmetric marginally trapped tube at all.

%%%%%%%
%     %
%     %
%%%%%%%

Several comments are in order.
First, there are two distinct senses in which one can interpret the phrase `asymptotic to the event horizon' here.
For convenience, we shall call these the causal and geometric senses, respectively; the former is the more commonly accepted notion, while the latter is in some ways more geometrically intuitive. 
(Definitions are given in Section \ref{asymp}.)
Although the two notions are not equivalent --- neither implies the other \emph{a priori} --- in each of the `nice behavior' results described above, the spherically symmetric marginally trapped tube is asymptotic to the event horizon in both senses.
However, for the nonexistence result of this paper, we rule out each case separately: Proposition \ref{causal} constructs an example of a black hole in which no (spherically symmetric) marginally trapped tube can be asymptotic to the event horizon in the causal sense, and Proposition \ref{geometric} then refines that example, adding additional hypotheses, to insure that none is asymptotic to the event horizon in the geometric sense, either.
Since the two examples share most of their properties, henceforth we refer simply to `our construction' when it is not important to distinguish between them.

%%%%%%%
%     %
%     %
%%%%%%%

Second, our construction entails prescribing the initial data in such a way that the scalar field $\phi$ does not decay along the %outgoing 
characteristic hypersurface representing the event horizon and is in fact bounded below there by a  positive constant.
(The additional hypotheses of Proposition \ref{geometric} require that $\phi$ be bounded above along the event horizon as well.)
However, while nothing in the literature currently rules out such a possibility, it is not clear whether such non-decaying data could actually arise on the event horizon of a black hole which had evolved from asymptotically flat initial data.
%
%(Self-gravitating massless scalar fields arising from such initial data, for instance, \emph{must} decay along the event horizon \cite{DRod}.)
%
This construction should therefore be thought of as exhibiting a mechanism by which the marginally trapped tube can be pushed away from the event horizon inside a black hole, rather than as a fully formed counterexample to the conjecture that a marginally trapped tube arising in gravitational collapse must be asymptotic to the event horizon.
(For the latter, one would need to specify asymptotically flat initial data for the given Einstein-matter system, show that the maximal future development of that initial data contains a black hole region, and \emph{then} prove that any marginally trapped tube lying in the black hole does not become asymptotic to the event horizon.
Concocting such an example would require a delicate balancing act: the initial data would have to be sufficiently small and/or decay sufficiently rapidly to insure that future null infinity forms ($\mathcal{I}^+ \neq \emptyset$), but sufficiently large that enough matter falls into the black hole to push any marginally trapped tube away from the event horizon as in our construction.)

%%%%%%%
%     %
%     %
%%%%%%%

Third, we note that general spherically symmetric black hole interiors were considered in \cite{Wil}, and it was shown there that if four particular inequalities involving the metric and stress-energy tensor are satisfied near future timelike infinity $i^+$, the future limit point of the event horizon, then the black hole \emph{must} contain a marginally trapped tube that asymptotically approaches the event horizon.
Our result here is fully consistent with that theorem, since two of the hypotheses of the latter --- inequalities A and B2 --- are immediately violated by the way we specify our scalar field initial data.

%%%%%%%
%     %
%     %
%%%%%%%

Finally, we emphasize that here we prove only that no \emph{spherically symmetric} marginally trapped tube is asymptotic to the constructed black hole's event horizon; it remains possible that there exists some non-spherically symmetric one in our constructed black hole which is.
The asymptotic behavior of non-spherically symmetric marginally trapped tubes is a subtle issue, and very little is currently known other than Corollary 4.6 of \cite{AG}.

%%%%%%%%%%%%%%%%%%%%%%%%%%%%%%%%%
%     				%
%	   BACKGROUND		%
%     				%
%%%%%%%%%%%%%%%%%%%%%%%%%%%%%%%%%

\section{Background assumptions}\label{ba}

\subsection{Self-gravitating massive scalar fields \& spherical symmetry}

A self-gravitating massive scalar field consists of a 4-dimensional spacetime $(\mathcal{M},g)$ and a scalar function $\phi \in C^2(\mathcal{M})$ satisfying the coupled Einstein-Klein-Gordon system:
\begin{equation}\label{eeq} R_{\alpha \beta} - \textstyle \frac{1}{2} R g_{\alpha \beta} = 2 T_{\alpha \beta} \end{equation} 
\begin{equation}\label{sfT} T_{\alpha \beta} = \phi_{; \alpha} \phi_{; \beta} - \left( \textstyle\frac{1}{2} \phi_{;\gamma} \phi^{;\gamma} + \frac 12 \mu \phi^2 \right) g_{\alpha \beta} \end{equation}
\begin{equation}\label{sfeq} \Box_g \phi = g^{\alpha \beta}\phi_{;\alpha \beta}  = \mu \phi, \end{equation}
where the constant $\mu > 0$ is the mass of the scalar field.

%%%%%%%
%     %
%     %
%%%%%%%

In general, a spacetime $(\mathcal{M},g)$ is said to be spherically symmetric if the Lie group $SO(3)$ acts on it by isometries with orbits which are either fixed points or spacelike 2-spheres.  
One typically assumes further that the quotient $\mathcal{Q} = \mathcal{M}/SO(3)$ inherits a 1+1-dimensional Lorentzian manifold structure with (possibly empty) boundary.
Suppressing pullback notation, the upstairs metric then takes the form 
\[ g = \overline{g} + r^2 \gamma, \]
where $\overline{g}$ is the Lorentzian metric on $\mathcal{Q}$, $\gamma$ is the usual round metric on $S^2$, and $r$ is a smooth nonnegative function on $\mathcal{Q}$ called the area-radius,  whose value at each point is proportional to the square root of the area of the corresponding two-sphere upstairs in $\mathcal{M}$.  
In a spherically symmetric self-gravitating massive scalar field spacetime, the function $\phi$ must be invariant under the $SO(3)$-action and must therefore also descend to a function on $\mathcal{Q}$.

%%%%%%%
%     %
%     %
%%%%%%%

If in addition the topology of $\mathcal{Q}$ is such that it admits a conformal embedding into a subset of 2-dimensional Minkowski space 
$\mathbb{M}^{1+1}$, one may identify $\mathcal{Q}$ with its image under this embedding and make use of the usual global double-null coordinates $u$, $v$ on $\mathbb{M}^{1+1}$.
The conformal metric $\overline{g}$ may be written
\[ \overline{g} = - \Omega^2 du\, dv, \]
where $\Omega = \Omega(u,v) > 0$ on $\mathcal{Q}$, and 
$\phi = \phi(u,v)$ and $r = r(u,v)$.
Then \eqref{eeq}-\eqref{sfeq} becomes the following system of equations on $\mathcal{Q}$:
\begin{eqnarray}  
\del_u(\Omega^{-2} \del_u r) & = & - r \Omega^{-2} (\del_u\phi)^2  \label{eq1}\\
\del_v(\Omega^{-2} \del_v r) & = & - r \Omega^{-2} (\del_v\phi)^2 \label{eq2}\\
\del_u m & = & \textstyle \frac 12 r^2 \left(  \mu \phi^2 \del_u r - 4 \Omega^{-2} (\del_u\phi)^2 \del_v r \right) \label{eq3}\\
\del_v m & = & \textstyle \frac 12 r^2 \left( \mu \phi^2 \del_v r - 4 \Omega^{-2} (\del_v\phi)^2 \del_u r \right) \label{eq4} \\ 
%and
%\begin{equation}
\mu \phi  & = & -4\Omega^{-2} \left( \del^2_{uv}\phi + \del_u\phi\,(\del_v \log r) + \del_v\phi\,(\del_u \log r) \right) \label{sfequv} 
%\end{equation}
\end{eqnarray}
where 
\begin{equation} m = m(u,v) = \frac{r}{2}(1 + 4\Omega^{-2} \del_u r \del_v r) \label{mdef} \end{equation}
is the Hawking mass.
The null constraints (\ref{eq1}) and (\ref{eq2}) are Raychaudhuri's equation applied to each of the two null directions in $\mathcal{Q}$.

%%%%%%%%%%%%%%%%%%%%%%%%%%%%%%%%%
%     				%
%     INITIAL VALUE PROBLEM	%
%     				%
%%%%%%%%%%%%%%%%%%%%%%%%%%%%%%%%%

\subsection{Initial value problem}\label{ivp}

For our construction, we proceed in the opposite direction.
That is, we find conformal, radial, and scalar functions $\Omega$, $r$, and $\phi$, respectively, that satisfy \eqref{eq1}-\eqref{sfequv} on a set $\mathcal{Q} \subset \mathbb{M}^{1+1}$  and whose properties align with those necessary for $(\mathcal{Q} \times S^2, -\Omega^2 du \, dv + r^2 \gamma)$ to be the interior of a black hole in a self-gravitating massive scalar field spacetime.
Our main result then identifies certain additional conditions on the functions $\Omega$, $r$, and $\phi$ that preclude the possibility of a (spherically symmetric) marginally trapped tube asymptotically approaching the event horizon of this black hole.

%%%%%%%
%     %
%     %
%%%%%%%

To begin, let us declare that for any  values $u, v > 0$, $K(u,v)$ denotes the characteristic rectangle given by
\[ K(u,v) = [0, u] \times [v, \infty) \subset \mathbb{M}^{1+1}. \]
Now choose some values $u_0, v_0 > 0$, \emph{fix} the specific rectangle $K(u_0, v_0)$, and define characteristic initial hypersurfaces $\mathcal{C}_{in} := [0, u_0] \times \{ v_0 \}$ and $\mathcal{C}_{out} := \{ 0 \} \times [v_0, \infty)$.
Constant-$v$ curves are interpreted as ingoing, constant-$u$ curves as outgoing.

%%%%%%%
%     %
%     %
%%%%%%%

Fixing a mass parameter $\mu > 0$,  assume we have initial data for $r$, $\Omega$, $\phi$ along $\mathcal{C}_{in} \cup \mathcal{C}_{out}$ and that the data satisfy the following: 
\begin{equation} \del_u r < 0 \,\text{ along }\, \mathcal{C}_{out}, \label{drdusign}\end{equation}
\begin{equation} \del_v r > 0 \,\text{ along }\, \mathcal{C}_{out},\label{drdvsign}\end{equation}
and
\begin{equation} 0 < r < r_+ \,\text{ along }\, \mathcal{C}_{out}, \label{rbounds}\end{equation}
for some constant $r_+ < \infty$.  
Given inequality \eqref{drdvsign}, we further assume that the value $r_+ = \lim_{v\rightarrow \infty} r(0,v)$.
These assumptions are necessary to guarantee that the outgoing initial hypersurface $\mathcal{C}_{out}$ could agree with the (quotient of) the event horizon in a black hole spacetime, i.e.\ $\mathcal{C}_{out} \equiv \mathcal{H}$.
% and the point $(0, \infty)= i^+$ (future timelike infinity). 
%
See Section 2.2 of \cite{Wil} for an explanation of the significance of each inequality.

%%%%%%%
%     %
%     %
%%%%%%%

Finally let $\mathcal{Q}$ be the maximal future development of this initial data in $K(u_0, v_0) \subset \mathbb{M}^{1+1}$.
We define three subsets of interest in $\mathcal{Q}$: 
the  regular region
\[ \mathcal{R} = \{ (u,v) \in \mathcal{Q} : \del_v r > 0 \text{ and } \del_u r < 0 \}, \] the  trapped region
\[ \mathcal{T} = \{ (u,v) \in \mathcal{Q} : \del_v r < 0 \text{ and } \del_u r < 0 \}, \] and the  marginally trapped tube 
\[ \mathcal{A} = \{ (u,v) \in \mathcal{Q} : \del_v r = 0 \text{ and } \del_u r < 0 \}. \] 
This marginally trapped tube definition agrees with the usual one (e.g., in \cite{A}), since in spherical symmetry the inner and outer expansions $\theta^-$ and $\theta^+$ are proportional to $\del_u r$ and $\del_v r$, respectively.
Note that assumption \eqref{drdvsign} implies that $\mathcal{A} \cap \mathcal{C}_{out} = \emptyset$.
We shall also make use of the fact that the quantity $1- \frac{2m}{r}$ is positive, negative, and zero in $\mathcal{R}$, $\mathcal{T}$, and $\mathcal{A}$, respectively; this follows immediately from \eqref{mdef}.

%%%%%%%%%%%%%%%%%%%%%%%%%%%%%%%%%
%     				%
%      ASYMPTOTIC DEFNS		%
%     				%
%%%%%%%%%%%%%%%%%%%%%%%%%%%%%%%%%

\subsection{Causal \& geometric asymptotic behavior} \label{asymp}
When one asks whether a marginally trapped tube is asymptotic to the event horizon, it is not immediately obvious what is meant --- the phrase  `asymptotic to the event horizon' can be interpreted in (at least) two different ways in the context of spherical symmetry.
On the one hand, if $\mathcal{A}$ is asymptotic to $\mathcal{H}$, then we expect all inextendible future-directed causal curves intersecting $\mathcal{H}$ at late times to intersect $\mathcal{A}$ also.
Thus, roughly speaking, the marginally trapped tube captures all the same light and matter as the event horizon (at least at late times).
This is the intuition behind what we call the causal sense of `asymptotic to the event horizon.'
On the other hand, since the areas of the two-spheres foliating the event horizon tend to some finite limiting value at late times, we also expect that the areas of the marginally trapped spheres (apparent horizons) foliating the marginally trapped tube should tend to that same limiting value.
Another way of saying this is that the radius of the marginally trapped tube should approach that of the event horizon in the limit.
What we call the geometric sense of `asymptotic to the event horizon' follows this interpretation.

%%%%%%%
%     %
%     %
%%%%%%%

More precisely, suppose we have a black hole interior $\mathcal{Q}$ as constructed above with event horizon $\mathcal{H} = \mathcal{C}_{out} = \{ 0 \} \times [v_0, \infty)$.
We say that the marginally trapped tube $\mathcal{A}$ is asymptotic to the event horizon $\mathcal{H}$ in the causal sense if $\mathcal{A} \cap \{ (u,v) \in \mathcal{Q} : v \geq V \}  = \{ ( f(v), v) : v \geq V \}$ for some function $f \in C([V, \infty))$, $f \geq 0$, such that $f(v) \rightarrow 0$ as $v\rightarrow \infty$, some $V \geq v_0$.
In other words, the curve described by $\mathcal{A}$ in $\mathbb{M}^{1+1}$ is truly asymptotic to the ray described by $\mathcal{H}$ as $v \rightarrow \infty$.
This is the definition of `asymptotic to the event horizon' used in all of the existing `nice behavior' results cited in the introduction.
Furthermore, this definition agrees with the more general one given in \cite{AG} (p.\ 16), which requires that the past Cauchy horizon of the marginally trapped tube coincide with the event horizon, at least in the portion of the spacetime to the future of some achronal spacelike hypersurface which intersects both of them.

%%%%%%%
%     %
%     %
%%%%%%%

For the precise definition of the geometric sense of `asymptotic,' recall that in our setup the black hole has asymptotic area-radius $r_+$ --- that is, $r \nearrow r_+$ along $\mathcal{H} = \mathcal{C}_{out}$.
We say that the marginally trapped tube $\mathcal{A}$ is asymptotic to the event horizon $\mathcal{H}$ in the geometric sense if, with the same setup as above, $\mathcal{A} \cap \{ (u,v) \in \mathcal{Q} : v \geq V \}  = \{ ( f(v), v) : v \geq V \}$ for a function $f \in C([V, \infty))$, $f\geq 0$, such that $r( (f(v), v) ) \rightarrow r_+$ as $v\rightarrow \infty$.
That is, $r \rightarrow r_+$ along $\mathcal{A}$ toward the asymptotic end.
Note that a marginally trapped tube being asymptotic to the event horizon in one sense does not imply that it is asymptotic in the other sense as well, at least not \emph{a priori}.
(In practice, however, the estimates used to show that a marginally trapped tube satisfies one definition often imply immediately that it satisfies the other.)

%%%%%%%%%%%%%%%%%%%%%%%%%%%%%%%%%
%     				%
%	  MAIN RESULTS		%
%     				%
%%%%%%%%%%%%%%%%%%%%%%%%%%%%%%%%%

\section{The Main Results}\label{main}

%%%%%%%
%     %
%     %
%%%%%%%

\begin{prop} \label{causal}
With the setup as described in Section \ref{ivp}, set 
\[ \eta = r\phi. \]
Suppose that 
\begin{equation} \del_u \eta (0, v_0) > 0, \label{corner+} \end{equation}
and that along $\mathcal{C}_{out}$,
\begin{equation} \eta   >  \textstyle\frac{1}{\sqrt{\mu}} + r_+ . \label{etacond} \end{equation}
Then there exists $0 < u_1 \leq u_0$ such that $\mathcal{Q}$ contains a rectangle $K(u_1, v_0)$ in which $\del_v r > 0$.
Thus no (spherically symmetric) marginally trapped tube is asymptotic to the event horizon $\mathcal{H} = \mathcal{C}_{out}$ in the causal sense.
\end{prop}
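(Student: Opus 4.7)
My plan is to derive a wave equation for the area radius $r$ whose sign is controlled by the hypotheses, to use the hypotheses to enforce $\partial_u\partial_v r > 0$ on a rectangle $K(u_1, v_0)$ for some $u_1 > 0$, and then to integrate in $u$ from the event horizon to deduce $\partial_v r > 0$ throughout. Starting from the rearrangement $\partial_u r\,\partial_v r = \Omega^2(2m-r)/(4r)$ of \eqref{mdef}, differentiating in $v$, and eliminating second derivatives via \eqref{eq2} and \eqref{eq4}, one obtains
\[
\partial_u\partial_v r \;=\; \frac{\Omega^2}{4r^2}\bigl(\mu r \eta^2 - 2m\bigr).
\]
Since $\Omega, r > 0$, positivity of $\partial_u\partial_v r$ is equivalent to $\mu r\eta^2 > 2m$.

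On $\mathcal{C}_{out}$, the signs \eqref{drdusign}--\eqref{drdvsign} together with \eqref{mdef} force $2m < r \leq r_+$, while $\eta > 1/\sqrt{\mu} + r_+$ yields
\[
\mu r\eta^2 - 2m \;>\; \mu r\bigl(1/\sqrt{\mu} + r_+\bigr)^2 - r \;=\; r\,\sqrt{\mu}\,r_+\bigl(2 + \sqrt{\mu}\,r_+\bigr) \;\geq\; r_0\,\sqrt{\mu}\,r_+\bigl(2 + \sqrt{\mu}\,r_+\bigr),
\]
with $r_0 := r(0, v_0) > 0$ (using $\partial_v r > 0$ on $\mathcal{C}_{out}$ to bound $r \geq r_0$). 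Hence $\partial_u\partial_v r > 0$ uniformly on $\mathcal{C}_{out}$; the extra $r_+$ in the hypothesis supplies exactly the quantitative cushion above the marginal value $\eta = 1/\sqrt{\mu}$.

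The core step is a continuity/bootstrap propagating $\mu r\eta^2 > 2m$ off $\mathcal{C}_{out}$ into a full rectangle $K(u_1, v_0)$. Because $v$ ranges over $[v_0, \infty)$, a naive continuity argument fails: I need a uniform-in-$v$ bound $|\partial_u(\mu r\eta^2 - 2m)| \leq C$ for small $u$, which I derive from \eqref{eq1} and \eqref{eq3} together with regularity of the initial data, using the corner hypothesis $\partial_u\eta(0, v_0) > 0$ to secure a favorable sign of $\partial_u\eta$ on a neighborhood of $(0, v_0)$ in $\mathcal{C}_{in}$ and thereby initialize the propagation. Choosing
\[
u_1 \;<\; \min\!\bigl(u_0,\; r_0\,\sqrt{\mu}\,r_+(2+\sqrt{\mu}\,r_+)/(2C)\bigr)
\]
then forces $\mu r\eta^2 - 2m > 0$, and hence $\partial_u\partial_v r > 0$, throughout $K(u_1, v_0)$. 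Integrating $\partial_u\partial_v r > 0$ in $u$ from $\mathcal{C}_{out}$ gives $\partial_v r(u,v) > \partial_v r(0, v) > 0$ on $K(u_1, v_0)$, so $\mathcal{A} \cap K(u_1, v_0) = \emptyset$. By the definition in Section \ref{asymp}, a graph $u = f(v)$ describing $\mathcal{A}$ with $f(v) \to 0$ would eventually satisfy $f(v) < u_1$, a contradiction; thus no marginally trapped tube is asymptotic to $\mathcal{H}$ in the causal sense.

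The main obstacle is this uniform-in-$v$ bootstrap: one must show the positive gap controlled on $\mathcal{C}_{out}$ is not closed uniformly as $u$ moves off $0$, even as $v \to \infty$. The $+r_+$ cushion in the hypothesis is precisely what makes this possible; without it, one has only $\mu r\eta^2 > 2m$ without a uniform gap, and nothing prevents the positivity from failing immediately off the horizon.
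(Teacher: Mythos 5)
Your key identity $\del^2_{uv} r = \frac{\Omega^2}{4r^2}(\mu r\eta^2 - 2m)$ is exactly the paper's equation \eqref{duvr}, and your computation of the margin on $\mathcal{C}_{out}$ is correct. But the step you yourself identify as the core of the argument --- propagating $\mu r\eta^2 > 2m$ off the horizon into a full rectangle --- is not actually carried out, and the method you sketch for it would not work. You propose a uniform-in-$v$ bound $|\del_u(\mu r\eta^2 - 2m)| \leq C$ "from \eqref{eq1} and \eqref{eq3} together with regularity of the initial data," but the right-hand sides of those equations involve $\Omega^{-2}(\del_u\phi)^2\del_v r$ and related quantities for which no a priori uniform control in $v$ is available in the interior; obtaining such a $C$ is precisely as hard as the problem itself. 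The paper's proof is structured to avoid any quantitative bound of this kind: it runs a bootstrap on the two \emph{open} conditions $\eta > \frac{1}{\sqrt{\mu}} + r_+$ and $\del_u\eta > 0$ over the causal past of each point, and closes it purely by sign arguments --- the second wave equation $\del^2_{uv}\eta = \frac14\Omega^2 r^{-2}[(1-\frac{2m}{r}) + \mu(\eta^2 - r^2) - 1]\eta$ is shown to be positive under the bootstrap assumptions (this is where the $+r_+$ cushion is actually used, to dominate the $-\mu r^2$ term via $r < r_+$, not as a Gronwall margin), so $\del_u\eta$ is recovered by integrating in $v$ from $\mathcal{C}_{in}$, and $\eta$ is recovered by integrating $\del_u\eta \geq 0$ in $u$ from $\mathcal{C}_{out}$. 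Your sketch never explains how $\eta$ itself (as opposed to $\mu r\eta^2 - 2m$) is kept above $\frac{1}{\sqrt{\mu}} + r_+$ away from the horizon, which is what feeds the positivity of $\del^2_{uv}r$.

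Separately, you take for granted that $\mathcal{Q}$ contains the full rectangle $K(u_1, v_0)$. The proposition asserts this, and it is not automatic: $\mathcal{Q}$ is a maximal development and could a priori terminate at a singular boundary inside $K(u_1,v_0)$. The paper closes this with an extension principle (Theorem 3.1 of \cite{D3}): a first singularity can emanate only from the trapped region or from the center $\{r=0\}$, and since the bootstrap shows $\mathcal{Q}\cap K_1 \subset \mathcal{R}$ with $r \geq r_0 > 0$ there, no such singularity can occur and $K_1 \subset \mathcal{Q}$. Without this step you can still rule out an MTT asymptotic to $\mathcal{H}$ in the causal sense (any point of $\mathcal{A}$ with $u < u_1$ would lie in $\mathcal{Q}\cap K_1 \subset \mathcal{R}$), but you have not proved the containment claim in the statement, and your final integration "from $\mathcal{C}_{out}$" implicitly uses that the relevant ingoing segments lie entirely in $\mathcal{Q}$.
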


%%%%%%%
%     %
%     %
%%%%%%%

\noindent
\emph{Remarks.}  
If instead of \eqref{etacond} we had assumed that $|\phi| + |\del_v \phi| = O(v^{-p})$ along $\mathcal{C}_{out}$, some $p > \frac 12$, then with a few additional minor technical assumptions imposed, we could conclude from Theorem 3 of \cite{Wil} that $\mathcal{Q}$ \emph{must} contain a marginally trapped tube asymptotic to the event horizon, in both the causal and geometric senses.
(The potential function $V(\phi)$ used there is $\frac 12 \mu \phi^2$ for a massive scalar field.)
Thus in some sense the non-decay of $\eta$ (and hence of $\phi$) is necessary for this nonexistence result.
More generally, as noted in the introduction, the lower bound \eqref{etacond} for $\eta$ immediately violates conditions A (and A$^\prime$) and B2 of \cite{Wil}, Theorem 1.

%%%%%%%
%     %
%     %
%%%%%%%

In general, $\mathcal{Q}$ need not contain a whole rectangle $K(u, v_0)$ for \emph{any} $u > 0$, and the fact that it does here implies that the spacetime extends all the way out to a Cauchy horizon, $[0, u_1] \times \{ \infty \}$.
Furthermore, since $\del_v r > 0$ in all of $K(u_1, v_0)$, this Cauchy horizon is accessible from the regular region $\mathcal{R}$, an unusual situation (cf.\ \cite{D1}).

%%%%%%%
%     %
%     %
%%%%%%%

Note that this result does not rule out the possibility of a marginally trapped tube asymptotic to the event horizon in the geometric sense.
It is possible, for example, that $r(u_1, v) \nearrow r_+$ as $v \rightarrow \infty$; then $\mathcal{A}$ could lie in $J^+ \left(\{ u_1 \} \times [v_0, \infty) \right) \cap \mathcal{Q}$, asymptotically approach the ray $\{ u_1 \} \times [v_0, \infty)$ (as a curve in $\mathbb{M}^{1+1}$), and have the property that $r \rightarrow r_+$ as $v \rightarrow \infty$ --- this would imply that $\mathcal{A}$ was indeed asymptotic to $\mathcal{H}$ in the geometric sense.
Our second result rules out this scenario, at the expense of imposing additional assumptions on the initial data.

%%%%%%%%%%%%%%%%%%%%%%%%%
%			%
%    Proposition 2	%
%			%
%%%%%%%%%%%%%%%%%%%%%%%%%

\begin{prop}\label{geometric}
Consider data as in Proposition \ref{causal}.
If we further require that along $\mathcal{C}_{out}$
\begin{equation} m > 0 \label{mpos}, \end{equation}
\begin{equation} \del_v \phi \leq 0, \label{dvphineg}\end{equation}
and 
\begin{equation} \left( \del_u r + r_+ \phi_+ \del_u \phi \right) (0, v_0) < 0 \label{ccond} \end{equation}
for some  constant $\phi_+ > \phi(0,v_0)$, then there exist $0 < u_2 \leq u_1$ and $\delta > 0$ such that $r \leq r_+ - \delta$ everywhere in $J^+(\{ u_2 \} \times [v_0, \infty)) \cap \mathcal{Q}$.
Then since $\mathcal{A} \cap K(u_2, v_0) = \emptyset$ by Proposition \ref{causal}, no (spherically symmetric) marginally trapped tube in $\mathcal{Q}$ is asymptotic to the event horizon in the geometric sense.
\end{prop}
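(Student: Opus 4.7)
The plan is to control $r$ from above by constructing an auxiliary quantity that is monotone along the event horizon and then propagating the monotonicity into a small neighborhood. First, by Proposition~\ref{causal} the rectangle $K(u_1, v_0)$ lies in $\mathcal{Q}$ with $\partial_v r > 0$ throughout; combined with Raychaudhuri's equation~\eqref{eq1} (which makes $\Omega^{-2}\partial_u r$ non-increasing in $u$) and~\eqref{drdusign}, this forces $\partial_u r < 0$ on all of $K(u_1, v_0)$, so we lie fully in the regular region. There $r$ decreases in $u$ and increases in $v$, so the supremum of $r$ over $J^+(\{u_2\}\times[v_0,\infty))\cap\mathcal{Q}$ equals $\lim_{v\to\infty} r(u_2, v)$ for any $u_2 \leq u_1$. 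The goal thus reduces to producing $u_2 > 0$ and $\delta > 0$ with $r(u_2, v) \leq r_+ - \delta$ for all $v \geq v_0$.

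The central idea is to study $\xi := r + r_+\phi_+\,\phi$, whose $u$-derivative at $(0,v_0)$ is negative by~\eqref{ccond}. From \eqref{eq1}--\eqref{sfequv} and \eqref{mdef} one derives the wave equations
\[
\partial_u\partial_v r = \frac{\Omega^2}{4 r^2}\bigl(\mu r\,\eta^2 - 2m\bigr),\qquad \partial_u\partial_v\phi = -\frac{\Omega^2 \mu\phi}{4} - \frac{\partial_u\phi\,\partial_v r + \partial_v\phi\,\partial_u r}{r},
\]
which combine to give
\[
\partial_u\partial_v\xi = \frac{\Omega^2\mu\phi}{4}\bigl(\eta - r_+\phi_+\bigr) - \frac{\Omega^2 m}{2r^2} - \frac{r_+\phi_+}{r}\bigl(\partial_u\phi\,\partial_v r + \partial_v\phi\,\partial_u r\bigr).
\]
Along $\mathcal{C}_{out}$ the first term is strictly negative, since~\eqref{dvphineg} forces $\phi(0,v) \leq \phi(0, v_0) < \phi_+$ and hence $\eta = r\phi \leq r_+\phi(0, v_0) < r_+\phi_+$; the second is negative by~\eqref{mpos}; and the $\partial_v\phi\,\partial_u r$ piece is non-positive because both factors are non-positive by~\eqref{dvphineg} and~\eqref{drdusign}. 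The remaining $\partial_u\phi\,\partial_v r$ term has a good sign provided $\partial_u\phi \geq 0$; this is established by a parallel bootstrap, since~\eqref{corner+} and~\eqref{drdusign} give $\partial_u\phi(0, v_0) > 0$, and the transport equation for $\partial_u\phi$ along $\mathcal{C}_{out}$ yields $\partial_v(\partial_u\phi) \leq 0$ whenever $\partial_u\phi \geq 0$. Together these give $\partial_u\xi(0, v) \leq \partial_u\xi(0, v_0) \leq -\delta_0$ for some $\delta_0 > 0$ and all $v \geq v_0$.

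A continuity argument in $u$, using uniform bounds on the solution in $K(u_1, v_0)$ supplied by the proof of Proposition~\ref{causal}, then provides $u_2 \in (0, u_1]$ such that $\partial_u\xi(u, v) \leq -\delta_0/2$ and $|\partial_u\phi(u, v)| \leq M$ uniformly on $[0, u_2]\times[v_0, \infty)\cap\mathcal{Q}$, with $M < \delta_0/(2 r_+\phi_+)$. Integrating $\partial_u\xi \leq -\delta_0/2$ in $u$ and using $r(0, v) \leq r_+$ yields
\[
r(u_2, v) \leq r(0, v) + r_+\phi_+\bigl(\phi(0, v) - \phi(u_2, v)\bigr) - \tfrac{\delta_0}{2}\,u_2 \leq r_+ - \bigl(\tfrac{\delta_0}{2} - r_+\phi_+ M\bigr)u_2,
\]
the desired bound with $\delta := (\delta_0/2 - r_+\phi_+ M)u_2$.

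The main obstacle is the sign propagation of $\partial_u\xi$ along the entire future of $\mathcal{C}_{out}$. Individually, $\partial_u r(0, v)$ and $\partial_u\phi(0, v)$ are each expected to degenerate to zero as $v \to \infty$, so the argument must genuinely use the combination $\xi$; the algebraic coincidence that the structural hypotheses~\eqref{mpos}, \eqref{dvphineg}, \eqref{etacond} and the choice of $\phi_+$ conspire to give the transport equation for $\partial_u\xi$ a definite sign is the heart of the proof. The simultaneous bootstrap on $\partial_u\phi \geq 0$ is likewise delicate; if it were to fail at some $v^* < \infty$, one would need quantitative control to verify that the definitely negative terms continue to dominate the single potentially positive contribution.
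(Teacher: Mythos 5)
Your choice of the quantity $\xi = r + r_+\phi_+\phi$ and the computation of $\del^2_{uv}\xi$ capture exactly the mechanism the paper uses (its $\delta(u)$ is precisely $-\del_u\xi(u,v_0)$, and its chain of estimates culminates in $\del^2_{uv} r < -r_+\phi_+\del^2_{uv}\phi$, i.e.\ $\del^2_{uv}\xi < 0$). But there are two genuine gaps. The decisive one is the final "continuity argument in $u$": you establish $\del_u\xi \leq -\delta_0$ only along $\mathcal{C}_{out}$ (at $u=0$) and then claim that continuity, plus "uniform bounds supplied by the proof of Proposition \ref{causal}," yields $\del_u\xi \leq -\delta_0/2$ on all of $[0,u_2]\times[v_0,\infty)$. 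The domain is noncompact in $v$, Proposition \ref{causal}'s proof is a purely qualitative open--closed argument that supplies no uniform-in-$v$ quantitative bounds on $\Omega$ or on second derivatives, and nothing prevents $\del_u\xi(u,\cdot)$ from degenerating relative to $\del_u\xi(0,\cdot)$ as $v\to\infty$ for every $u>0$. (The auxiliary requirement $|\del_u\phi|\leq M$ with $M < \delta_0/(2r_+\phi_+)$ is likewise unjustified --- $\del_u\phi$ has no reason to be small --- though it is also unnecessary, since $\del_u\phi>0$ makes the term $\phi(0,v)-\phi(u_2,v)$ negative.) The correct route, and the one the paper takes, is to propagate the sign conditions into the interior: $m\geq 0$ spreads from the compact segment $[0,u_2]\times\{v_0\}$ via \eqref{eq4}, $\del_v\phi\leq 0$ spreads inward in $u$ from $\mathcal{C}_{out}$ by a bootstrap on the set $\mathcal{V}=\{\del_v\phi\leq 0\}$ using $\del^2_{uv}\phi<0$ there, whence $\phi\leq\phi_+$ and $\eta< r_+\phi_+$ hold throughout $K_2$; one then gets $\del^2_{uv}\xi<0$ everywhere in $K_2$ and integrates in $v$ at each fixed $u$ starting from the compact ingoing segment, where $\inf_u(-\del_u\xi(u,v_0))>0$ by continuity.

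The second gap is your bootstrap for $\del_u\phi\geq 0$ along $\mathcal{C}_{out}$: you argue that the transport equation gives $\del_v(\del_u\phi)\leq 0$ whenever $\del_u\phi\geq 0$, but this inequality points the wrong way --- it makes $\del_u\phi$ \emph{decrease} in $v$, and since $\del_v(\del_u\phi)$ remains strictly negative at $\del_u\phi=0$ (the term $-\tfrac14\Omega^2\mu\phi$ survives), nothing prevents $\del_u\phi$ from crossing zero downward at finite $v$. The bootstrap does not close. The fact you need is nevertheless true: Proposition \ref{causal}'s proof gives $\del_u\eta>0$ throughout $K_1$, and since $\eta>0$, $r>0$, and $\del_u r<0$, this forces $\del_u\phi>0$ everywhere in $K_1$. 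Citing that instead repairs this step.
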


%%%%%%%%%%%%%%%%%%%%%%%%%
%			%
%	 PROOFS		%
%			%
%%%%%%%%%%%%%%%%%%%%%%%%%

\section{Proofs of the Main Results}\label{main}

%%%%%%%%%%%%%%%%%%%%%%%%%
%			%
%	 PROOF 1	%
%			%
%%%%%%%%%%%%%%%%%%%%%%%%%

\begin{proof}[Proof of Proposition \ref{causal}]
First observe that since $\del_u r < 0$ along $\mathcal{C}_{out}$ by assumption \eqref{drdusign}, equation \eqref{eq1} immediately implies that
\begin{equation} \del_u r < 0  \,\text{ in }\, \mathcal{Q}. \label{drduineq} \end{equation}
Then \eqref{rbounds} and \eqref{drduineq} together imply that 
\begin{equation} r < r_+ \,\text{ in }\, \mathcal{Q}. \label{r+unif}\end{equation}

%%%%%%%
%     %
%     %
%%%%%%%

Choose $0 < u_1 \leq u_0$ sufficiently small that $r$, $\del_u \eta$, and $\del_v r$ are all strictly positive on 
$[0, u_1] \times \{ v_0\}$; such a choice is possible by assumptions \eqref{drdvsign}, \eqref{rbounds}, and \eqref{corner+} by continuity.
Set $K_1 = K(u_1, v_0) = [0, u_1] \times [v_0, \infty)$; henceforth we restrict our attention to $\mathcal{Q} \cap K_1$.

%%%%%%%
%     %
%     %
%%%%%%%

From \eqref{eq2} it follows that for any $(u,v) \in \mathcal{R}$, the outgoing null segment $\{ u \} \times [v_0, v] \subset \mathcal{R}$ as well. 
Setting $r_0 = r(u_1, v_0)$ and applying \eqref{drduineq}, we therefore have
\begin{equation}  0 < r_0 \leq r(u,v)  \label{r0}\end{equation}
for all  $(u,v) \in \mathcal{R} \cap K_1$.

%%%%%%%
%     %
%     %
%%%%%%%

Before proceeding, we clarify our notation with respect to the causal and topological structures in play. 
First, given a point $p$ in $\mathcal{Q}$ or $K_1$, we use $J^-(p)$ to denote its causal past in $(\mathbb{M}^{1+1}, - du\, dv)$, rather than in $(\mathcal{Q}, -\Omega^2 du\, dv)$.
Thus for any $p \in \mathcal{Q}$, $J^-(p)$ is an infinite backwards cone in $\mathbb{M}^{1+1}$; to express its causal past in $\mathcal{Q}$, we write $J^-(p) \cap \mathcal{Q}$.
The topology on $\mathcal{Q}$ is the relative one with respect to $K_1$, which of course inherits its topology from $\mathbb{M}^{1+1}$.
Unless otherwise specified, however, set boundaries and closures are taken with respect to $K_1$ rather than $\mathcal{Q}$.
It is also perhaps worth noting that for $p \in \mathcal{Q} \cap K_1$, $J^-(p) \cap \mathcal{Q} \equiv J^-(p) \cap K_1$, but if $p \in \overline{\mathcal{Q}} \cap K_1$, then $J^-(p) \cap K_1$ may contain points which  $J^-(p) \cap \mathcal{Q}$ does not (namely, points in $\overline{\mathcal{Q}}\setminus\mathcal{Q}$).

%%%%%%%
%     %
%     %
%%%%%%%

Define a region $\mathcal{U} \subset \mathcal{Q} \cap K_1$ to be the set of all points $(u,v)$ such that the following two inequalities are satisfied for all $(\tilde{u}, \tilde{v}) \in J^{-}(u,v) \cap K_1$:
\begin{eqnarray}
\eta(\tilde{u}, \tilde{v}) & > & \frac{1}{\sqrt{\mu}} + r_+ \label{bs1}\\
\del_u \eta(\tilde{u}, \tilde{v}) & > & 0. \label{bs2}
\end{eqnarray}

%%%%%%%
%     %
%     %
%%%%%%%

The proof proceeds in three steps.
We first observe that $\del_v r > 0$ in $\overline{\mathcal{U}} \cap \mathcal{Q}$, i.e.\ $\overline{\mathcal{U}} \cap \mathcal{Q} \subset \mathcal{R}$.
%
%(Recall that the closure of $\mathcal{U}$ is taken with respect to $K_1$, not $\mathcal{Q}$.)
%
Next we show that $\mathcal{U}$ fills $\mathcal{Q} \cap K_1$, i.e.\ $\mathcal{U} = \mathcal{Q} \cap K_1$, by showing that $\mathcal{U}$ is both open and closed in $\mathcal{Q} \cap K_1$.
And finally we show that $\left( \overline{\mathcal{Q}}\setminus \mathcal{Q} \right) \cap K_1 = \emptyset$, which implies that $K_1 = \mathcal{Q} \cap K_1 = \mathcal{U}$.
Since $\mathcal{U} \subset \mathcal{R}$, this last statement in turn implies that $\mathcal{A} \cap K_1 = \emptyset$, proving the proposition.

%%%%%%%
%     %
%     %
%%%%%%%

First, to see that $\del_v r > 0$ in $\overline{\mathcal{U}} \cap \mathcal{Q}$, note that combining equations \eqref{eq1}, \eqref{eq3} and \eqref{mdef} (or alternately,  \eqref{eq2}, \eqref{eq4} and \eqref{mdef}) yields
\begin{equation}  \del^2_{uv} r  =  
\textstyle \frac 14 \Omega^2 r^{-1}  \left[ (1 - \frac{2m}{r}) + \mu \eta^2 -1 \right]. \label{duvr}\end{equation}
Consequently by \eqref{bs1}, we have
\begin{eqnarray*} \del^2_{uv} r 
& > & \textstyle \frac 14 \Omega^2 r^{-1}  \left[ (1 - \frac{2m}{r}) + \mu (\frac{1}{\mu} + r_+^2) - 1 \right], \\
& = & \textstyle \frac 14 \Omega^2 r^{-1}  \left[ (1 - \frac{2m}{r}) + \mu r_+^2  \right] \\
& > & 0
\end{eqnarray*}
in $\overline{\mathcal{R}} \cap \overline{\mathcal{U}} \cap \mathcal{Q}$.
Now, suppose $\del_v r(u_\ast,v_\ast) \leq 0$ for some  $(u_\ast,v_\ast) \in \overline{\mathcal{U}} \cap \mathcal{Q}$.
Then since $\del_v r(0, v_\ast) > 0$ by \eqref{drdvsign}, there exists $0 < u_{\ast\ast} \leq u_\ast$ such that $(u_{\ast\ast}, v_\ast)$ is the first point along the segment $[0, u_\ast] \times \{ v_\ast \}$ to leave $\mathcal{R}$ --- that is, $\del_v r (u_{\ast\ast}, v_\ast) = 0$, while $\del_v r (u, v_\ast) > 0$ for all $0 \leq u < u_{\ast\ast}$.
The segment $[0, u_\ast] \times \{ v_\ast \} \subset \overline{\mathcal{U}} \cap \mathcal{Q}$ by definition of $\mathcal{U}$, so  $[0, u_{\ast\ast}] \times \{ v_\ast \} \subset \overline{\mathcal{R}} \cap \overline{\mathcal{U}} \cap \mathcal{Q}$.
But then $\del^2_{uv} r > 0$ along $[0, u_{\ast\ast}] \times \{ v_\ast \}$ while $\del_v r (0,v_\ast) > 0 = \del_v r(u_{\ast\ast}, v_\ast)$, a contradiction.
So in fact $\overline{\mathcal{U}} \cap \mathcal{Q} \subset \mathcal{R}$.

%%%%%%%
%     %
%     %
%%%%%%%

Next, that $\mathcal{U}$ is open in $\mathcal{Q} \cap K_1$ follows immediately from the fact that the inequalities defining $\mathcal{U}$ are strict.
To show that $\mathcal{U}$ is also closed in $\mathcal{Q} \cap K_1$, i.e.\ that $\overline{\mathcal{U}} \cap \mathcal{Q} \subset \mathcal{U}$, we employ a simple bootstrap argument.
Suppose $(u_\ast,v_\ast) \in \overline{\mathcal{U}} \cap \mathcal{Q}$.
Then $\del_u \eta (u, v_\ast) \geq 0$ for all $0 \leq u \leq u_\ast$, which immediately yields
\[ \eta(u_\ast, v_\ast) \geq \eta(0, v_\ast) > \frac{1}{\sqrt{\mu}} + r_+,\]
retrieving inequality \eqref{bs1} at $(u_\ast, v_\ast)$.
To retrieve \eqref{bs2}, note that by combining \eqref{sfequv} and \eqref{duvr} we have
\[ \del^2_{uv} \eta 
= \textstyle \frac 14 \Omega^2 r^{-2} \left[  (1 - \textstyle  \frac{2m}{r})  + \mu (\eta^2 - r^2)  - 1    \right]  \eta . \]
Since $J^-(u_\ast, v_\ast) \cap K_1 \subset \overline{\mathcal{U}} \cap \mathcal{Q} \subset \mathcal{R}$, we have $(1 - \frac{2m}{r}) > 0$ in $J^-(u_\ast, v_\ast) \cap K_1$.
From inequality \eqref{bs1}, we also have $\eta \geq \frac{1}{\sqrt{\mu}} + r_+$ in $J^-(u_\ast, v_\ast) \cap K_1$.
Therefore, at any point in $J^-(u_\ast, v_\ast) \cap K_1$, we have
\begin{eqnarray*} \del^2_{uv} \eta 
& > & \textstyle \frac 14 \Omega^2 r^{-2} \left[  \mu \left( (\frac{1}{\sqrt{\mu}} + r_+)^2 - r^2 \right)  - 1    \right]  \eta \\
& > & \textstyle \frac 14 \Omega^2 r^{-2} \left[  \mu \left( \frac{1}{\mu} + r_+^2 - r^2 \right)  - 1    \right]  \eta \\
& > & 0,
\end{eqnarray*}
since $r < r_+$ everywhere in $\mathcal{Q}$ by \eqref{r+unif}.
In particular, we have $\del^2_{uv} \eta (u_\ast,v) > 0$ for all $v_0 \leq v \leq v_\ast$, 
which implies
\[ \del_u \eta (u_\ast, v_\ast) \geq \del_u \eta(u_\ast, v_0) > 0, \]
since $\del_u \eta > 0$ on $[0, u_1] \times \{ v_0 \}$.
We have thus retrieved strict inequalities \eqref{bs1} and \eqref{bs2} in $\overline{\mathcal{U}} \cap \mathcal{Q}$, from which it follows that $\mathcal{U}$ is closed in $\mathcal{Q} \cap K_1$.

%%%%%%%
%     %
%     %
%%%%%%%

Since $\mathcal{U} \neq \emptyset$ (it at least contains a neighborhood of the point $(0,v_0)$) and $\mathcal{Q} \cap K_1$ is necessarily connected, we therefore have $\mathcal{U} = \mathcal{Q} \cap K_1$.

%%%%%%%
%     %
%     %
%%%%%%%

Finally, to see that $K_1 \subset \mathcal{Q}$, we employ an extension principle known to hold for our matter model, Theorem 3.1 of \cite{D3}, which says that `first' singularities (in the causal sense) can  arise only from the trapped region $\mathcal{T}$ or from the center of symmetry $\Gamma$,  i.e.,  $\Gamma = \{ p \in \mathcal{Q} : r(p) = 0 \}$.
In our setting, we apply this principle as follows:
Suppose that $K_1$ is not contained in $\mathcal{Q}$. 
Then in particular $\left( \overline{\mathcal{Q}} \setminus \mathcal{Q} \right) \cap K_1 \neq \emptyset$, and since $\overline{\mathcal{Q}} \setminus \mathcal{Q} $ is achronal, we can find a point $p_\ast \in \left( \overline{\mathcal{Q}} \setminus \mathcal{Q} \right) \cap K_1 $ such that all  points to the causal past of $p_\ast$ in $K_1$ lie in the spacetime $\mathcal{Q}$, rather than on its future boundary --- that is, such that  $\left( J^-(p_\ast)\setminus \{p_\ast \} \right) \cap K_1 \subset \mathcal{Q} \cap K_1$.
Since $\mathcal{Q} \cap K_1 = \mathcal{U} \subset \mathcal{R}$, we thus have $\left( J^-(p_\ast)\setminus \{p_\ast\} \right) \cap K_1 \subset \mathcal{R}$.
Theorem 3.1 of \cite{D3} then asserts that $p_\ast \in \overline{\Gamma}\setminus \Gamma$, but \eqref{r0} implies that $r \geq r_0 > 0$ everywhere in $\mathcal{Q} \cap K_1$,  a contradiction.
So in fact $\left( \overline{\mathcal{Q}} \setminus \mathcal{Q} \right) \cap K_1 = \emptyset$, which implies that $K_1 = \mathcal{Q} \cap K_1$.
Thus $K_1 = \mathcal{U} \subset \mathcal{R}$, and hence $\mathcal{A} \cap K_1 = \emptyset$, as claimed.
\end{proof}

%%%%%%%%%%%%%%%%%%%%%%%%%
%			%
%	PROOF 2		%
%			%
%%%%%%%%%%%%%%%%%%%%%%%%%

\begin{proof}[Proof of Proposition \ref{geometric}]
First of all, it follows from assumptions \eqref{mpos} and \eqref{ccond} by continuity that we may choose $0 < u_2 \leq u_1$ sufficiently small that $m \geq 0$, $\phi \leq \phi_+$, and $\del_u r + r_+ \phi_+ \del_u \phi < 0$ along $[0, u_2] \times \{ v_0 \}$, where $u_1$ is as in the proof of Proposition \ref{causal}.
Set $K_2 = K(u_2, v_0) = [0, u_2] \times [v_0, \infty)$.

%%%%%%%
%     %
%     %
%%%%%%%

Note that since all of the hypotheses of Proposition \ref{causal} are still in place, we know from its proof that $\del_u r < 0$, $\del_v r > 0$,  $\eta > 0$, $\del_u \eta > 0$, and $\del^2_{uv} r > 0$ hold everywhere in $K_2 \subset K_1 = \mathcal{Q} \cap K_1$.
%, where again $K_1 = [0, u_1] \times [v_0, \infty)$.

%%%%%%%
%     %
%     %
%%%%%%%

Since $\del_u r < 0$ and $\del_v r > 0$ in $K_2$ and $m \geq 0$ along $[0, u_2] \times \{ v_0 \}$, it follows immediately from equation \eqref{eq4} that $m \geq 0$ and hence $1 - \frac{2m}{r} \leq 1$ everywhere in $K_2$.
Then recalling equation \eqref{duvr}, we have
\begin{eqnarray} \del^2_{uv} r  
& = & \textstyle \frac 14 \Omega^2 r^{-1}  \left[ (1 - \frac{2m}{r}) + \mu \eta^2 -1 \right] \nonumber \\
& \leq & \textstyle \frac 14 \mu \Omega^2 r^{-1}   \eta^2 \label{duvrest}
\end{eqnarray}
in $K_2$.
Combining equations \eqref{sfequv} and \eqref{duvr} yields 
\[ \textstyle\frac{1}{4} \mu \Omega^{2} r^{-1} \phi  = - r^{-1} \left( \del^2_{uv}\phi + \del_u\phi\,(\del_v \log r) + \del_v\phi\,(\del_u \log r) \right), \]
and substituting this into \eqref{duvrest}, we find that
\begin{eqnarray} \del^2_{uv} r  
& \leq &  - r^{-1} \left( \del^2_{uv}\phi + \del_u\phi\,(\del_v \log r) + \del_v\phi\,(\del_u \log r) \right) \cdot r^2 \phi  \nonumber \\
& = &  - \eta \left( \del^2_{uv}\phi + \del_u\phi\,(\del_v \log r) + \del_v\phi\,(\del_u \log r) \right).  \label{duvrest1.5} %\\
%& < &  - \eta \left( \del^2_{uv}\phi  + \del_v\phi\,(\del_u \log r) \right) \label{duvrest2}
\end{eqnarray}
everywhere in $K_2$. %, having used the inequalities $\eta > 0$, $\del_u r < 0$, and $\del_v r > 0$.

%%%%%%%
%     %
%     %
%%%%%%%

Recall that $\del_u \eta > 0$ in $K_2$.
This implies that
\[ 0 < r \del_u \phi + \phi \del_u r, \]
so since $\phi > 0$ (since $\eta > 0$), $r > 0$, and $\del_u r < 0$, we must have $\del_u \phi > 0$ in $K_2$.
Thus from \eqref{duvrest1.5}  we obtain
\begin{equation} \del^2_{uv} r  <  - \eta \left( \del^2_{uv}\phi  + \del_v\phi\,(\del_u \log r) \right). \label{duvrest2}
\end{equation}

%%%%%%%
%     %
%     %
%%%%%%%

Now, the righthand side of \eqref{duvrest2} must be positive in $K_2$ because $\del^2_{uv} r$ is, which means that
\begin{equation}\label{duvphi} \del^2_{uv}\phi  + \del_v\phi\,(\del_u \log r) < 0 \end{equation}
everywhere in $K_2$.
Setting $\mathcal{V} = \{ p \in K_2 : \del_v \phi(p) \leq 0 \}$, inequality \eqref{duvphi} implies that $\del^2_{uv} \phi < 0$ in $\mathcal{V}$--- thus $\del_v \phi$ must decrease along ingoing null rays in $\mathcal{V}$.
It then follows from \eqref{dvphineg} that $K_2 \subset \mathcal{V}$.
That is, $\del_v \phi \leq 0$, and hence $\del^2_{uv} \phi < 0$,  in all of $K_2$.

%%%%%%%
%     %
%     %
%%%%%%%

Since $\phi(u_2, v_0) \leq \phi_+$ by our choice of $u_2$, inequalities $\del_u \phi > 0$ and $\del_v \phi \leq 0$ together imply that $\phi(p) \leq \phi_+$ for all $p \in K_2$.
Thus, recalling \eqref{r+unif}, we have
\begin{equation*} \eta < r_+ \phi_+ \,\text{ in } \, K_2.
\end{equation*}

%%%%%%%
%     %
%     %
%%%%%%%

Finally, returning to \eqref{duvrest2} we have
\begin{eqnarray} 
\del^2_{uv} r  
& < &  - \eta \left( \del^2_{uv}\phi  + \del_v\phi\,(\del_u \log r) \right) \nonumber \\
& \leq &  - \eta  \del^2_{uv}\phi \nonumber \\
& < &  - r_+ \phi_+ \del^2_{uv}\phi. \label{duvrest3}
\end{eqnarray}
Integrating \eqref{duvrest3} along an outgoing null ray $\{ u \} \times [v_0, v] \subset K_2$, we have
\[ \del_u r(u,v) - \del_u r(u, v_0) <   - r_+ \phi_+  (\del_u \phi(u,v) - \del_u \phi(u, v_0)).
\]
Setting $\delta(u) = - \left( \del_u r(u, v_0) + r_+ \phi_+ \del_u \phi(u, v_0) \right)$ and using the fact that \linebreak $\del_u \phi(u,v) > 0$, we then obtain
\begin{equation} \del_u r(u,v)  <  -\delta(u) \leq - \delta_0, \label{drduest}
\end{equation}
where $\delta_0 = \inf_{0 \leq u \leq u_2} \delta(u)$; our choice of $u_2$ guarantees that $\delta_0 > 0$.
Finally, integrating \eqref{drduest} along an ingoing null segment $[0, u] \times \{ v\}$, we have
\begin{eqnarray*} 
r(u,v) - r(0,v)  & < &  - \delta_0\cdot u,
\end{eqnarray*}
so in particular,
\[ r(u_2, v) < r_+ - \delta_0 \cdot u_2 \]
for all $v \geq v_0$.
Setting $\delta = \delta_0 \cdot u_2$, this implies the result of the proposition, since $\del_u r < 0$ in all of $\mathcal{Q}$.
\end{proof}

%%%%%%%%%%%%%%%%%%%%%%%%%%%%%%%%%
%     				%
%	ACKNOWLEDGMENTS		%
%     				%
%%%%%%%%%%%%%%%%%%%%%%%%%%%%%%%%%

\section*{Acknowledgments}

The author would like to thank Pieter Blue for his assistance interpreting some of the physics literature.

%%%%%%%%%%%%%%%%%%%%%%%%%%%%%%%%%
%     				%
%	REFERENCES		%
%     				%
%%%%%%%%%%%%%%%%%%%%%%%%%%%%%%%%%

\bibliography{mybib}{}
\bibliographystyle{amsplain}

\end{document}